\begin{document}

\newif\ifarxiv 
\arxivtrue
\newif\iffigdir 
\figdirfalse

\let\doendproof\endproof
\renewcommand\endproof{~\hfill\qed\doendproof}

\makeatletter
    \renewcommand*{\@fnsymbol}[1]{\ensuremath{\ifcase#1\or *\or \dagger\or \S\or
       \mathsection\or \mathparagraph\or \|\or **\or \dagger\dagger
       \or \ddagger\ddagger \else\@ctrerr\fi}}
\makeatother

\newcommand*\samethanks[1][\value{footnote}]{\footnotemark[#1]}
\newcommand{\elone}{grid}

\title{Grid-Obstacle Representations with Connections to Staircase Guarding}

\titlerunning{Grid-Obstacle Representations and Staircase Guarding}

\author{
Therese Biedl$^1$\and
Saeed Mehrabi$^1$}
\authorrunning{T. Biedl, and S. Mehrabi}
\institute{
$^1$~Cheriton School of Computer Science\\
University of Waterloo, Waterloo, Canada.\\
\email{\{biedl, smehrabi\}@uwaterloo.ca}
}

\newcommand{\keywords}[1]{\par\addvspace\baselineskip
\noindent\keywordname\enspace\ignorespaces#1}

\maketitle

\begin{abstract}
In this paper, we study \elone-obstacle representations of graphs where
we assign grid-points to vertices and define
obstacles such that an edge exists if and only if an
$xy$-monotone grid path connects the two endpoints without hitting
an obstacle or another vertex.  It was previously argued that
all planar graphs have a \elone-obstacle representation in 2D,
and all graphs have a \elone-obstacle representation in 3D.
In this paper, we show that such constructions are possible with
significantly smaller grid-size than previously achieved. 
Then we study the variant where vertices are not blocking, and
show that then \elone-obstacle representations exist for bipartite graphs.
The latter has applications in so-called \emph{staircase guarding} of
orthogonal polygons; using our \elone-obstacle representations, we show that staircase guarding is \textsc{NP}-hard in 2D. 
\end{abstract}

\section{Introduction}
\label{sec:introduction}
Recently, Bishnu et al.~\cite{BGMMP} initiated the study of
{\em \elone-obstacle representations}.  Here the vertices of a
graph $G=(V,E)$ are mapped to points in an integer grid, and other grid-points are marked
as {\em  obstacles} in such a way that $(v,w)$ is an
edge of $G$ if and only if there exists an $xy$-monotone path 
in the grid from $v$ to $w$ that contains no obstacle-point and no
point that belongs to some vertex $\neq v,w$.  
See also Fig.~\ref{fig:obst_repr}. This is a special case 
of a more general problem, which
asks for placing points and obstacles in the plane such that an 
edge $(v,w)$ exists if and only if there is a shortest path (in some distance
metric) from $v$ to $w$ that does not intersect obstacles.  See
also Alpert et al.~\cite{AKL10}, who initiated the study of obstacle
numbers, and~\cite{DM15} and the references therein for more recent developments.

Bishnu et al.~\cite{BGMMP} showed that any planar graph has a \elone-obstacle representation in 2D, and every graph has a \elone-obstacle representation in 3D. 
The main idea was
to use a straight-line drawing, and then approximate it by putting a sufficiently
fine grid around it that consists of obstacles everywhere except near the edge.
The analysis of how fine a grid is required is not straightforward; Bishnu et al.~claimed
that in 2D an $O(n^2)\times O(n^2)$-grid is sufficient. They did not give bounds for
the size needed in 3D (but it clearly is polynomial and at least $\Omega(n^2)$ in each dimension).
Pach showed that not all bipartite graphs have \elone-obstacle representations in 2D~\cite{P16}.

In this paper, we improve the grid-size bounds of \cite{BGMMP}.  In particular, rather than
converting a straight-line drawing directly into a \elone-obstacle representation,
we first convert it into a visibility representation or an orthogonal drawing that
has special properties, but resides in a linear-size grid. This can then
be easily converted to a \elone-obstacle representation.  Thus we obtain
2D \elone-obstacle representations for planar graphs in an $O(n)\times O(n)$-grid,
and 3D \elone-obstacle representations for all graphs in an $O(n)\times O(n)\times O(n)$-grid.

We then discuss the case with the restriction that vertices act as obstacles for edges not incident to them, and show that sometimes this restriction can be dropped. We hence obtain {\em non-blocking \elone-representations} in 2D for all planar bipartite graphs and in 3D for arbitrary bipartite graphs.

The latter has applications: we can use the constructions for hardness proofs for a polygon-guarding problem. A point guard $g$ is said to {\em staircase guard} (or \emph{$s$-guard} for short) a point $p$ inside an orthogonal polygon $P$ if $p$ can be reached from $g$ by a {\em staircase}; that is, an orthogonal path inside $P$ that is both $x$- and $y$-monotone. In the \emph{$s$-guarding problem}, the objective is to guard an orthogonal polygon with the minimum number of $s$-guards. Motwani et al.~\cite{MotwaniRS90} proved that $s$-guarding is polynomial on simple orthogonal polygons. Gewali and Ntafos~\cite{GewaliN92} proved that the problem is \textsc{NP}-hard in 3D; 
since they reduce from vertex cover in graphs with maximum degree 3 this in fact implies \textsc{APX}-hardness in 3D~\cite{AlimontiK00}. To our knowledge, however, the complexity was open for 2D polygons with holes. Using non-blocking \elone-representations, we show that it is \textsc{NP}-hard.

\section{2D Grid-Obstacle Representations}
\label{sec:2D}
Let $G=(V,E)$ be a planar graph.  
To build a \elone-obstacle representation, we use a {\em visibility
representation}  where every vertex is represented by a {\em bar} (a horizontal line segment), 
and every edge is represented by a vertical line segment between the bars corresponding to the endpoints of the edge~\cite{RT86},\cite{TT86},\cite{Wis85}. We need here a construction with a special property, which can easily be achieved by ``shifting'' where the edges attach at the vertices (see 
\ifarxiv
the appendix
\else
the full paper
\fi
for a direct proof). See also Fig.~\ref{fig:specialVR}.

\begin{lemma}
\label{lem:specialVR}
Every planar graph has a visibility representation in an $O(n)\times O(n)$-grid for which any vertex-bar can be split into a left and
right part such that all downward edges attach on the left and all upward edges attach on the right.
\end{lemma}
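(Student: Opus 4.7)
My plan is to obtain the required visibility representation from a bipolar $st$-orientation of $G$ combined with a variant of the standard Rosenstiehl--Tarjan / Tamassia--Tollis construction, in which at each vertex bar the $x$-coordinates of the attaching edges are arranged so that the downward edges come first and the upward edges come last. The combinatorial fact that makes the lemma plausible is the well-known property of planar $st$-orientations that at every internal vertex $v$, the predecessors (downward edges) form a contiguous arc in the cyclic planar order at $v$, and the successors (upward edges) form the complementary arc; so the planar embedding already groups downward and upward edges separately, and one only needs to ``straighten'' this grouping into the $x$-coordinate.

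Concretely, I would first compute an $st$-orientation of $G$ together with its dual $st$-orientation, and use them to assign $y$-coordinates to vertices and a preliminary set of $x$-coordinates to edges in the standard way, yielding a visibility representation in an $O(n)\times O(n)$ grid. Then I would modify the $x$-coordinates by a constant-factor rescaling and a local shift: for every vertex $v$ choose a split coordinate $t(v)$, reassign the $x$-coordinate of each upward edge of $v$ to a column strictly greater than $t(v)$ in the order consistent with the cyclic order of upward edges at $v$, and symmetrically force each downward edge into a column strictly less than $t(v)$. To keep the reassignment consistent across shared edges, I would choose the values $t(v)$ monotonically along the $st$-orientation, so that for every edge $(u,v)$ with $u$ below $v$ one has $t(u)\le t(v)$; this guarantees a valid column in $[t(u),t(v)]$ to host the edge, one that lies to the right of the split of $u$ and to the left of the split of $v$. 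The bar of $v$ is then redefined as the $x$-range spanned by the shifted endpoints of its incident edges, and $t(v)$ is the desired split.

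The hard part will be verifying global consistency of the shifts and that the grid blowup stays linear. The contiguous-arcs property is what lets the local ordering at each vertex mesh with the local orderings at its neighbors: when the upward edges of $v$ are placed in their cyclic order into a reserved right-sub-range of $v$'s bar, the matching downward edges at the upper endpoints automatically fall into their left-sub-ranges in the correct cyclic order. One still has to check corner cases, in particular at the poles $s,t$ of the orientation and at pairs of vertices consecutive in the topological order, and make sure that distinct edges never collapse onto the same column within the same horizontal strip. The grid size stays $O(n)\times O(n)$ because only $O(1)$ reserved columns are needed per face of the embedding and there are $O(n)$ faces, and the shift only multiplies the original $x$-range by a constant.
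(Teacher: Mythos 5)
Your starting point is sound---the contiguity of the predecessor and successor arcs at each vertex of a planar $st$-orientation is exactly the combinatorial fact that makes the lemma true. But the step you defer to ``verifying global consistency of the shifts'' is not a routine check; it is the entire difficulty, and your constraint system does not resolve it. Requiring $t(u)\le t(v)$ and a column strictly between $t(u)$ and $t(v)$ for each edge $(u,v)$ only constrains the edge relative to its own two endpoints. It says nothing about a third vertex $w$ with $y(u)<y(w)<y(v)$: after the reassignment, the vertical segment of $(u,v)$ may stab the (also reassigned) bar of $w$, and two reassigned edges may land in the same column over overlapping $y$-ranges. In the Rosenstiehl--Tarjan/Tamassia--Tollis construction such collisions are prevented precisely because columns are indexed by faces via the dual $st$-orientation; any per-vertex shifting scheme has to be shown compatible with that face structure, and asserting that the cyclic orders ``automatically mesh'' at the two ends of each edge does not establish this.

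The paper sidesteps the issue by performing the separation \emph{before} building the visibility representation: starting from a straight-line drawing of height $O(n)$ with no horizontal edges, each vertex $v$ is split into $v^\ell$ carrying the incoming edges and $v^r$ carrying the outgoing edges, joined by a horizontal edge, with $v^r$ placed just to the right of $v$ and the outgoing edges rerouted via one bend in a newly inserted row. The modified drawing is still planar and $y$-monotone, so the conversion of \cite{Biedl14} yields a genuine crossing-free visibility representation of width $O(m+n)=O(n)$ in which the bars of $v^\ell$ and $v^r$ are consecutive in the same row and can be merged; the seam between them is the split point required by the lemma. If you want to salvage your route, the clean fix is the same idea expressed in the $st$-graph itself: split each vertex into $v^\ell,v^r$ (equivalently, insert a new face between the incoming and the outgoing arc at $v$) and rerun the face-based column assignment, rather than perturbing coordinates of an already-computed representation.
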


\begin{figure}[t]
\hspace*{\fill}
\hspace*{\fill}
\begin{subfigure}[b]{0.4\linewidth}
\iffigdir
	\includegraphics[width=0.8\linewidth,page=3]{figures/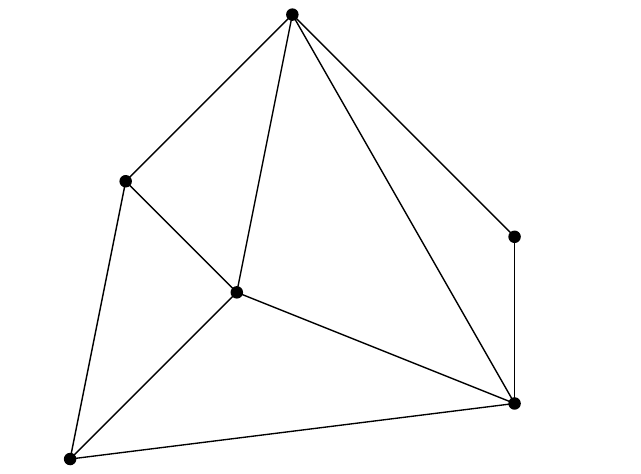}
\else
	\includegraphics[width=0.8\linewidth,page=3]{planar2D.pdf}
\fi
	\caption{}
	\label{fig:specialVR}
\end{subfigure}
\hspace*{\fill}
\begin{subfigure}[b]{0.4\linewidth}
\iffigdir
	\includegraphics[width=0.8\linewidth,page=5]{figures/planar2D.pdf}
\else
	\includegraphics[width=0.8\linewidth,page=5]{planar2D.pdf}
\fi
	\caption{}
	\label{fig:obst_repr}
\end{subfigure}
\hspace*{\fill}
\caption{A special visibility representation
gives a \elone-obstacle representation.  }
\label{fig:planar2D}
\end{figure}

Now convert such a visibility representation into a \elone-obstacle representation.  First, 
double the grid so that no two grid-points on edge-segments or vertex-bars 
are adjacent unless the corresponding graph-elements were.
For each vertex $v$, assign as vertex-point some grid-point that lies between the two 
parts of the bar of $v$; this exists since we doubled the grid.  The obstacles 
consist of all grid points that are not on some edge segment
or vertex bar. Clearly, the representation is in an $O(n)\times O(n)$-grid. 
In 
\ifarxiv
the appendix,
\else
the full paper,
\fi
we show that this is a \elone-obstacle representation, and so we have:
\begin{theorem}
\label{thm:2Dobstacle}
\label{thm:planar}
Every planar graph has a 2D \elone-obstacle representation in an $O(n)\times O(n)$-grid.
\end{theorem}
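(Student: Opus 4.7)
The plan is to verify the two defining properties of a \elone-obstacle representation separately, relying entirely on the attachment structure from Lemma~\ref{lem:specialVR} together with the grid-doubling.

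For the forward direction, take an edge $(v,w)$ with $y_v<y_w$. By Lemma~\ref{lem:specialVR}, its vertical edge-segment attaches on the right part of $v$'s bar and on the left part of $w$'s bar, hence sits at an $x$-coordinate strictly between those of the two vertex-points (which lie at the respective splits). The witness path therefore proceeds from the vertex-point of $v$ rightward along $v$'s bar to the foot of the edge, upward along the edge to $w$'s bar, and rightward along $w$'s bar to the vertex-point of $w$; every coordinate is non-decreasing, every intermediate grid-point lies on $v$'s bar, on the edge, or on $w$'s bar (so is a non-obstacle), and no other vertex-point is visited.

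For the reverse direction, suppose $(u,v)$ is a non-edge and, for contradiction, that $\pi$ is an $xy$-monotone, obstacle-free, other-vertex-free grid path between their vertex-points. By swapping roles if necessary I may assume $y_u<y_v$, so $\pi$ is $y$-non-decreasing. The key structural observation is that a non-obstacle point lies on some bar or some edge, and after doubling two such points can be adjacent only when the underlying graph elements are. Consequently $\pi$ must alternate between horizontal traversals of vertex-bars and vertical traversals of edge-segments, leaving each bar at an attachment point. Since $\pi$ is $y$-non-decreasing, every intermediate bar is exited upward through an upward-edge attachment (hence on the right part) and entered from below through a downward-edge attachment (hence on the left part); in particular $x$ is forced to be non-decreasing along $\pi$ as well.

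Now list the vertices whose bars $\pi$ crosses in order as $u=w_0,w_1,\dots,w_k=v$. Consecutive $w_i,w_{i+1}$ are connected by an edge-segment of the visibility representation, so $(w_i,w_{i+1})\in E$. If $k=1$ then $(u,v)\in E$, contradicting the assumption, so there is some intermediate index $1\le i\le k-1$ where $\pi$ enters $w_i$'s bar on the left part and exits on the right part. A horizontal, $x$-monotone, non-obstacle segment between the two parts must pass through every intervening grid-point on the bar, including in particular the vertex-point placed in the gap between the two parts; this contradicts the assumption that $\pi$ avoids other vertex-points. The main obstacle I expect is making the ``alternation'' claim rigorous, i.e.\ ruling out that $\pi$ could sneak between bar-segments and edge-segments at non-attachment points; this is precisely what the grid-doubling, together with the guarantee that unrelated graph elements do not induce adjacent non-obstacles, is designed to prevent.
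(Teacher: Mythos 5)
Your proposal is correct and follows essentially the same route as the paper: both directions rest on the left/right attachment split of Lemma~\ref{lem:specialVR}, the forward direction builds the same bar--edge--bar monotone witness path, and the reverse direction uses the fact that a path entering a bar from below (left part) and leaving upward (right part) must cross the blocking vertex-point in the gap. The only cosmetic difference is that the paper stops the hypothetical path already at the first bar it reaches after one edge-segment, whereas you track the whole sequence of traversed bars and derive the contradiction at the first intermediate one; the underlying argument is identical.
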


One can easily argue that any straight-line drawing of a planar graph of height $H$
can be converted into a visibility representation of height $2H$ and width $O(n)$ 
(see also~\cite{Biedl14}).  Then we can apply the same approach as above.
Based on drawings for trees~\cite{CrescenziBP92},
outer-planar graphs \cite{Biedl11} and series-parallel graphs~\cite{Biedl11}, we hence get:

\begin{corollary}
Every tree and every outer-planar graph has a 2D \elone-obstacle representation in an 
$O(\log n)\times O(n)$-grid. Every series-parallel graph has a 2D \elone-obstacle
representation in an $O(\sqrt{n})\times O(n)$-grid.
\end{corollary}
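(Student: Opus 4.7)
The plan is to follow the pipeline sketched immediately before the corollary: take a known small-height straight-line drawing for each graph class, convert it into a visibility representation of comparable height, and then apply the construction of Theorem~\ref{thm:planar}. For trees I would invoke the $O(\log n)$-height, $O(n)$-width straight-line drawings of~\cite{CrescenziBP92}; for outer-planar graphs the $O(\log n)\times O(n)$ drawings of~\cite{Biedl11}; and for series-parallel graphs the $O(\sqrt{n})\times O(n)$ drawings of~\cite{Biedl11}. Thus the only issue is to verify that the two post-processing steps (to visibility representation, and then to grid-obstacle representation) blow up the grid only by constant factors.

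The first step is the straight-line to visibility conversion of~\cite{Biedl14}: a straight-line drawing of height $H$ and width $W$ yields a visibility representation of height $2H$ and width $O(n)$, because every vertex is expanded into a bar on its own row and each edge is routed as a vertical segment between two such rows. The second step, which is really the heart of the matter, is to guarantee the special property of Lemma~\ref{lem:specialVR}: each bar must be splittable so that downward edges attach to the left part and upward edges to the right. I would apply the same horizontal shifting argument used in the proof of Lemma~\ref{lem:specialVR} (the one deferred to the appendix): at every vertex, sort incident edges by direction and reassign their $x$-coordinates inside the bar, possibly widening the bar locally. This shifting is purely horizontal, preserves the row of each vertex, and increases the total width only by a constant factor since the sum of vertex degrees is $O(n)$.

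Once a special visibility representation of height $O(H)$ and width $O(n)$ is in place, the grid-doubling construction from the proof of Theorem~\ref{thm:planar} yields a \elone-obstacle representation in a grid of size $O(H)\times O(n)$, which gives the three claimed bounds by substituting $H=O(\log n)$ or $H=O(\sqrt n)$.

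The step I expect to be the main obstacle is the second one, namely ensuring that the horizontal shifting needed for Lemma~\ref{lem:specialVR} is compatible with the small-height visibility representation produced in step one. In particular, I need to check that widening bars locally never forces edges to be rerouted vertically or bars to be moved to new rows, which would destroy the $O(H)$ height bound. The cleanest way to argue this is to note that the shifting operates independently within each row of the visibility representation and only affects $x$-coordinates; since edges are vertical segments and rows are pairwise disjoint, no vertical conflicts can arise, and the height remains $O(H)$ throughout.
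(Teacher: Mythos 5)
Your overall pipeline (small-height straight-line drawing $\to$ visibility representation $\to$ grid-obstacle representation via the construction of Theorem~\ref{thm:planar}) is exactly the paper's, and the final bounds are right. The problem is in the step you yourself flag as the crux: establishing the left/right split property of Lemma~\ref{lem:specialVR} by ``purely horizontal'' shifting \emph{inside} an already-built visibility representation. That argument does not work as stated. In a visibility representation an edge is a single vertical segment at a fixed $x$-coordinate spanning every row between its two endpoint bars; you cannot change where it attaches to one bar without moving the entire segment, and the moved segment can then cross bars or other edge segments in the intermediate rows. Your claim that ``rows are pairwise disjoint, so no vertical conflicts can arise'' is precisely where this fails: the conflicts are not within a row but along the length of the relocated vertical segments, and reordering interleaved up- and down-attachments at a bar forces a global, simultaneous reassignment at both endpoints of every moved edge.

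The paper sidesteps this by doing the split \emph{before} the conversion: in the straight-line drawing, each vertex $v$ is replaced by two adjacent vertices $v^\ell$ (keeping the incoming, i.e.\ downward, edges) and $v^r$ (taking the outgoing edges), a new row is inserted after each existing row so that the rerouted outgoing edges can bend in the row above $v$, and only then is the resulting $y$-monotone drawing converted (via~\cite{Biedl14}, preserving $y$-coordinates and within-row order) into a visibility representation, where $v^\ell$ and $v^r$ appear as consecutive bars joined by a horizontal edge and are merged back into one bar. This is where the height $2H$ in the sentence preceding the corollary comes from; it costs only a constant factor, so the $O(\log n)\times O(n)$ and $O(\sqrt{n})\times O(n)$ bounds survive. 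If you replace your horizontal-shifting step with this split-then-convert order, the rest of your argument goes through as written.
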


\section{3D Grid-Obstacle Representation}
\label{sec:3D}
In this section, we argue that a similar (and even simpler) construction
gives a \elone-obstacle representation in 3D.  We obtain this by building
an orthogonal representation first that has special properties.  This
representation is not quite a graph drawing, because edges may overlap;
this will not create problems for the obstacle representation later.

Enumerate the vertices as $v_1,\dots,v_n$ in arbitrary order.  Place $v_i$
at $(i,i,i)$.  To draw an edge $(v_i,v_j)$ with $i<j$, we use 
the path 
$(i,i,i)-(j,i,i)-(j,i,j)-(j,j,j)$
along the cube spanned between the two points.  See Fig.~\ref{fig:all3D}.
Observe that all edges $(v_h,v_i)$ with $h<i$ reach $v_i$ from the $y^-$-side and that all edges $(v_i,v_j)$ with $i<j$ leave $v_i$ at the $x^+$-side.
Edges incident to $v_i$ may overlap along these two sides, but otherwise
there are no overlaps or crossings in the drawing.  Also, we clearly reside in an
$n\times n\times n$-grid.

\begin{figure}[t]
\hspace*{\fill}
\iffigdir
	\includegraphics[width=0.25\linewidth,page=1]{figures/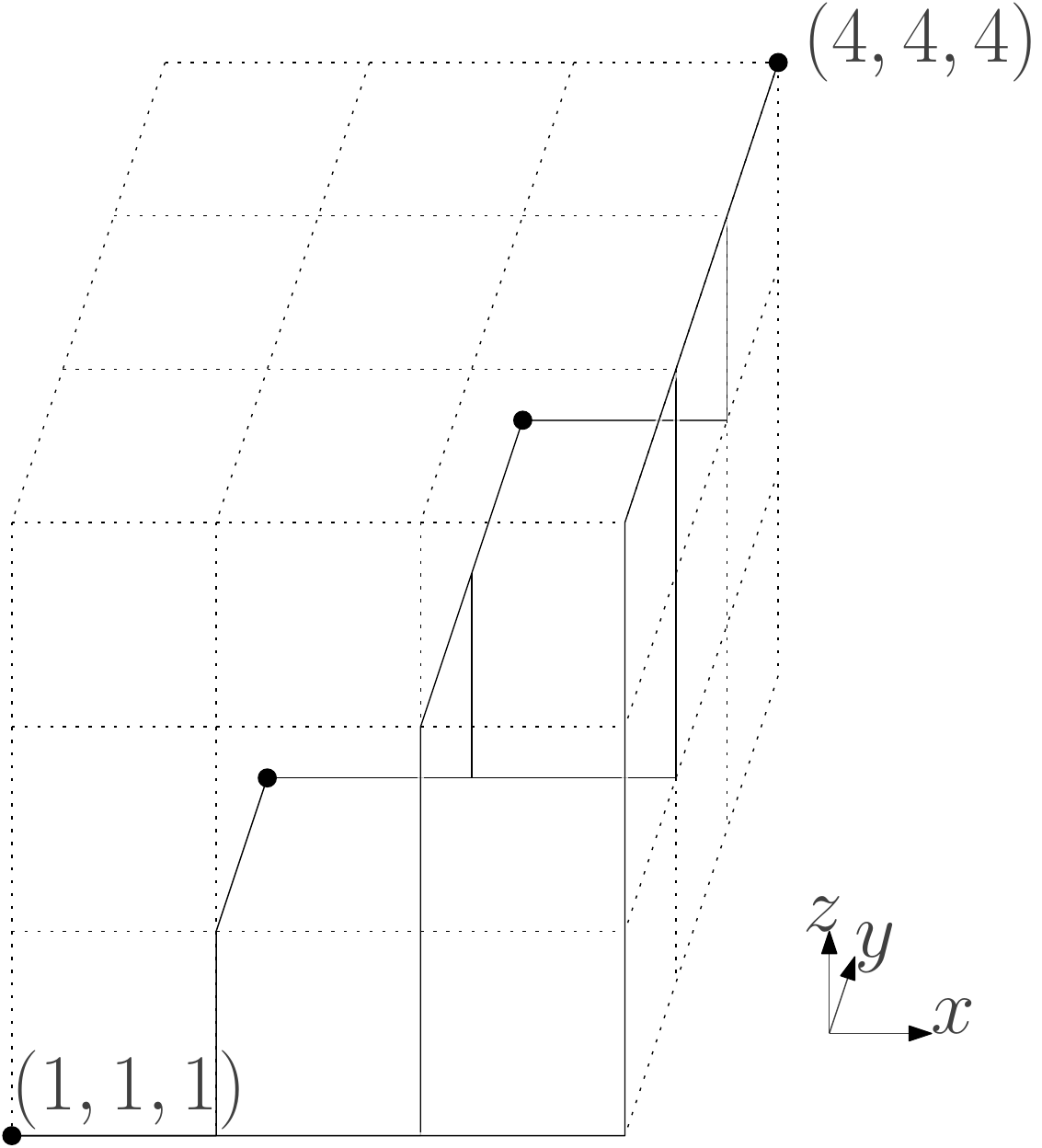}
\else
	\includegraphics[width=0.25\linewidth,page=1]{all3D.pdf}
\fi
\hspace*{\fill}
\iffigdir
	\includegraphics[width=0.25\linewidth,page=2]{figures/all3D.pdf}
\else
	\includegraphics[width=0.25\linewidth,page=2]{all3D.pdf}
\fi
\hspace*{\fill}
\caption{A 3D orthogonal representation of $K_4$, and converting it into 
a \elone-obstacle representation.  All grid-points that are not shown
are blocked by obstacles.}
\label{fig:all3D}
\end{figure}

Now double the grid, then cover any grid-point by an obstacle unless it is used by a vertex or an edge.   One can easily argue that the result is
a \elone-obstacle representation (see 
\ifarxiv
the appendix
\else
the full paper
\fi
for a formal proof), and we have:
\begin{theorem}
\label{thm:3Dobstacle}
Every graph has a 3D \elone-obstacle representation in an $O(n)\times O(n)\times O(n)$-grid.
\end{theorem}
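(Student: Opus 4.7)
The grid-size bound is immediate from the construction: after doubling, every vertex and edge lies in $\{0,1,\dots,2n\}^3$, which is an $O(n)\times O(n)\times O(n)$-grid. What remains is to check both directions of the defining iff of a grid-obstacle representation.

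The \emph{edge direction} is easy. For $(v_i,v_j)\in E$ with $i<j$, the prescribed path $(2i,2i,2i)\to(2j,2i,2i)\to(2j,2i,2j)\to(2j,2j,2j)$ has all three coordinates non-decreasing, so it is monotone. Every grid-point it traverses is used by this very edge and hence is not an obstacle; and since two coordinates are constant on each of its three segments, the only points on the path with $x=y=z$ are the two endpoints, so no intermediate vertex is hit.

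For the \emph{non-edge direction}, assume $(v_i,v_j)\notin E$ with $i<j$ and consider any monotone grid path $P$ from $(2i,2i,2i)$ to $(2j,2j,2j)$ avoiding obstacles and other vertices. The key structural fact is that every non-obstacle grid-point lies on some specific edge-path, and therefore has at least two coordinates that are even and equal (to $2a$ or $2b$ of the corresponding edge $(v_a,v_b)$). I would use this to perform a short case analysis. First, from $v_i$ only the $+x$-direction is unblocked, since the $+y$- and $+z$-neighbors $(2i,2i+1,2i)$ and $(2i,2i,2i+1)$ fail the parity requirement for every edge-path. Second, while travelling along the line $y=z=2i$, turns into $+y$ are blocked everywhere by the same parity check, and a $+z$-turn is possible only at an even $x$-coordinate $2t$ and only if $(v_i,v_t)\in E$ (so that the next point $(2t,2i,2i+1)$ lies on the second segment of that edge). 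Third, once $P$ has turned onto the second segment of some such edge, a similar analysis shows that the only monotone continuation proceeds up to $z=2t$ and then along the third segment to $v_t=(2t,2t,2t)$, with every attempted detour into the interior blocked by a parity conflict.

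Stringing these observations together, every valid monotone path leaving $v_i$ is forced to terminate at some neighbour $v_t$ of $v_i$, and therefore $P$ can reach $v_j$ only if $(v_i,v_j)\in E$, contradicting the hypothesis. The main obstacle is the third step: ruling out every monotone shortcut into the interior of $[2i,2j]^3$ requires, at each type of non-obstacle grid-point, checking which of its three forward neighbours are themselves non-obstacles and reasoning about which edge-paths they can belong to. The doubling of the grid is the crucial design choice that makes this work cleanly: it forces all turns to occur at even coordinates and turns each odd row into a one-way corridor that must be traversed along a single segment.
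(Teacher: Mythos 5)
Your proposal follows essentially the same route as the paper's proof: both force the monotone path to leave $v_i$ in the $x^+$-direction, travel along the first-segment corridor, turn onto the second segment of an incident edge only at an even coordinate, and then proceed to the neighbour vertex, which blocks further progress. Your explicit parity observation (every unobstructed non-vertex point has two even coordinates fixed by some edge-path) is just a cleaner packaging of the paper's implicit claim that ``there are no other adjacent unobstructed grid-points,'' so the argument is correct and matches the paper's.
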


Notice that the obstacle in this case can be made
to be just one polyhedron (albeit of high genus).

\section{Non-Blocking Grid-Obstacle Representations}
\label{sec:notVertices}
In our definition of \elone-obstacle representation, we required that the
grid point of any vertex  $v$ acts as an obstacle to any other path. 
The main reason for this is that otherwise paths could ``seep
through'' a vertex, creating unwanted adjacencies.  In this section, we 
consider {\em non-blocking \elone-obstacle representations}, which means
that vertices do not act as obstacles.

\subsection{Planar bipartite graphs}
\label{subsec:planarBGraphs}
We first give an algorithm for non-blocking
\elone-obstacle representation of planar
bipartite graphs. It is known that any such graph $G=(A\cup B,E)$
has an {\em HH-drawing} \cite{BiedlKM98}, i.e., a planar drawing where
all vertices in $A$ have positive $y$-coordinate,
all vertices in $B$ have negative $y$-coordinate,
every edge is drawn with at most one bend, 
and all bends have $y$-coordinate 0.
See also Fig.~\ref{fig:bipartite}.

In particular, we know that every edge is drawn $y$-monotonically.
Any such drawing can be converted into a visibility
representation~\cite{Biedl14} where the $y$-co\-or\-di\-nate of every
vertex is unchanged. So we obtain:

\begin{lemma}
\label{lem:hhDrawing}
Let $G=(A\cup B,E)$ be a planar bipartite graph.
Then, there exists a visibility representation of $G$ such that
all vertices in $A$ have only neighbours below,
and all vertices in $B$ have only neighbours above.
\end{lemma}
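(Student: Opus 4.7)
The plan is to combine two known constructions: the HH-drawing of \cite{BiedlKM98} and the monotone-drawing-to-visibility-representation conversion of \cite{Biedl14}. Both are already mentioned in the paragraph preceding the lemma, so the proof should be essentially a verification that the two fit together without spoiling the side of the bipartition each vertex sits on.

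First I would invoke the HH-drawing theorem of \cite{BiedlKM98} to obtain a planar drawing of $G$ in which every vertex of $A$ sits strictly above the $x$-axis, every vertex of $B$ sits strictly below it, and every edge is a polyline with at most one bend, located on the line $y=0$. The important structural consequence, which I would state explicitly, is that every edge is drawn $y$-monotonically, starting at some $A$-vertex of positive $y$-coordinate and ending at some $B$-vertex of negative $y$-coordinate (or vice versa).

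Next I would apply the conversion from \cite{Biedl14}, which turns any planar $y$-monotone drawing into a visibility representation while preserving the $y$-coordinate of every vertex. Since this preservation keeps every $a \in A$ at positive height and every $b \in B$ at negative height, every edge in the resulting visibility representation is a vertical segment whose top endpoint lies on a bar in $A$ and whose bottom endpoint lies on a bar in $B$. Equivalently, every neighbour of a vertex in $A$ lies below it, and every neighbour of a vertex in $B$ lies above it, which is exactly the claim.

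The only potential obstacle is a black-box dependence on \cite{Biedl14}: one must check that its conversion accepts arbitrary planar drawings with bent, $y$-monotone edges (rather than only straight-line ones) and that it does not reorder vertices vertically. Both properties are part of the statement in \cite{Biedl14}, so no additional work is required; the lemma then follows by simple inspection of the resulting representation.
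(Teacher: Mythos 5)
Your proposal is correct and follows exactly the route the paper takes: obtain an HH-drawing via \cite{BiedlKM98}, observe that every edge is $y$-monotone with $A$ above and $B$ below the line $y=0$, and then apply the $y$-coordinate-preserving conversion to a visibility representation from \cite{Biedl14}. The paper presents this same argument in the paragraph immediately preceding the lemma, so no further comparison is needed.
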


Now create an obstacle representation as before by 
doubling the grid, and placing obstacles at all grid-points that are
not used by the drawing.  Place each vertex $a\in A$ at
the rightmost grid-point of the bar of $a$, and each $b\in B$
at the leftmost grid-point of the bar of $b$.
One easily verifies that this is a non-blocking \elone-obstacle 
representation:  For each vertex $a$ in $A$, no $xy$-monotone path can go 
through the grid-point of $a$ without ending there, because no grid-point 
higher than $a$ can be reached when going through $a$.  Similarly one
argues for $B$, and so  we have:

\begin{theorem}
\label{thm:planarBipartite}
Every planar bipartite graph has a non-blocking \elone-obstacle
representation in an $O(n)\times O(n)$-grid.
\end{theorem}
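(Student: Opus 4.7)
The plan is to carry out the construction described immediately before the theorem and verify both directions of the correspondence between edges of $G$ and $xy$-monotone paths. First, I would apply Lemma~\ref{lem:hhDrawing} to obtain a visibility representation of $G$ on an $O(n)\times O(n)$-grid in which every $a\in A$ has all its neighbours below its bar and every $b\in B$ has all its neighbours above its bar; since the underlying HH-drawing places all of $A$ strictly above all of $B$, every $A$-bar lies strictly above every $B$-bar. I would then double the grid, fill every grid-point not used by a bar or an edge-segment with an obstacle, and place each $a\in A$ at the rightmost grid-point of its bar and each $b\in B$ at the leftmost grid-point of its bar.

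For the easy direction, an edge $(a,b)\in E$ with $a\in A$, $b\in B$ is represented in the visibility drawing by a vertical edge-segment at some $x$-coordinate $e_x$ whose top lies on $a$'s bar and whose bottom lies on $b$'s bar. Because $a$ is the rightmost point of its bar and $b$ the leftmost of its bar, $b_x\leq e_x\leq a_x$, so the path that goes leftward along $a$'s bar from $(a_x,a_y)$ to $(e_x,a_y)$, then downward along the edge-segment to $(e_x,b_y)$, then leftward along $b$'s bar to $(b_x,b_y)$ has both $x$- and $y$-coordinates non-increasing, and hence is $xy$-monotone.

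The main work, and the step I expect to be the main obstacle, is to rule out spurious monotone paths: because the representation is non-blocking, I cannot invoke vertex grid-points as barriers and must instead exploit Lemma~\ref{lem:hhDrawing} directly. The key structural observation is that for any $a\in A$, the only non-obstacle neighbours of any grid-point on $a$'s bar lie on the same bar (at $y=a_y$) or on an edge-segment strictly below the bar; the symmetric statement holds for $B$. From this I can deduce two complementary statements. First, any $xy$-monotone path whose starting vertex is $a\in A$ and whose $y$-coordinate is non-decreasing must stay on $a$'s bar (any departure would force $y$ to decrease), and since $a$ is the only vertex on that bar, the path can only terminate at $a$ itself. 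Second, any $xy$-monotone path starting at $a\in A$ whose $y$-coordinate is non-increasing either stays on $a$'s bar or descends exactly one edge-segment onto some $B$-bar, and by the symmetric statement it must then remain on that $B$-bar (any further move would go up), so it can only end at the unique vertex of that $B$-bar.

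Combining these with the symmetric statements for paths starting in $B$, I can handle all cases for the endpoints $u$ and $v$: the cases $u,v\in A$ and $u,v\in B$ with $u\neq v$ are impossible, while in the case $u\in A$, $v\in B$ the path must traverse exactly one edge-segment, which corresponds to an edge $(u,v)$ in the visibility representation and hence in $G$. This gives the non-blocking \elone-obstacle representation in an $O(n)\times O(n)$-grid claimed by the theorem.
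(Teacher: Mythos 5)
Your proposal is correct and follows essentially the same route as the paper: the identical construction (Lemma~\ref{lem:hhDrawing}, grid doubling, obstacles on unused points, $A$-vertices at the right ends and $B$-vertices at the left ends of their bars), with the spurious-path argument resting on the same observation that $A$-bars admit departures only downward and $B$-bars only upward. You merely spell out the case analysis that the paper compresses into one sentence.
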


\begin{figure}[ht]
\hspace*{\fill}
\iffigdir
	\includegraphics[width=0.3\linewidth,page=1]{figures/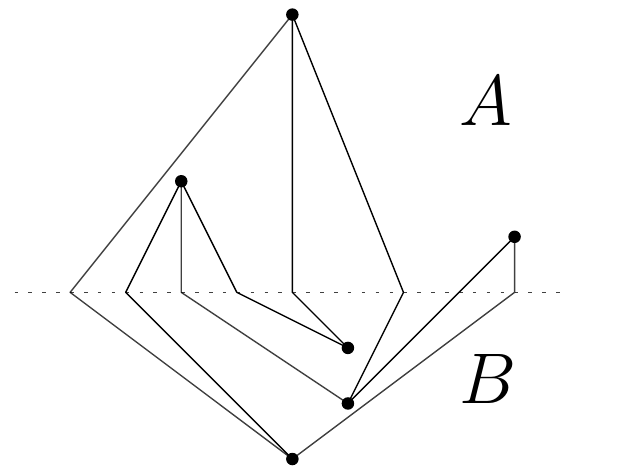}
	\hspace*{\fill}
	\includegraphics[width=0.3\linewidth,page=2]{figures/bipartite2D.pdf}
	\hspace*{\fill}
	\includegraphics[width=0.3\linewidth,page=3]{figures/bipartite2D.pdf}
\else
	\includegraphics[width=0.3\linewidth,page=1]{bipartite2D.pdf}
	\hspace*{\fill}
	\includegraphics[width=0.3\linewidth,page=2]{bipartite2D.pdf}
	\hspace*{\fill}
	\includegraphics[width=0.3\linewidth,page=3]{bipartite2D.pdf}
\fi
\hspace*{\fill}
\caption{An HH-drawing of a planar bipartite graph, and
converting it to a non-blocking \elone-obstacle representation.}
\label{fig:bipartite}
\end{figure}

\subsection{Application to staircase guarding}
Recall that the $s$-guarding problem consists of finding the minimum
set $S$ of points in a given orthogonal polygon $P$ such that for any $q\in P$
there exists a $p\in S$ that is connected to $q$ via a staircase inside $P$.
Using non-blocking \elone-obstacle representations, we can show:

\begin{theorem}
\label{thm:sGuarding}
$s$-guarding is \textsc{NP}-hard on orthogonal polygons with holes.
\end{theorem}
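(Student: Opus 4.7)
The plan is to reduce from an NP-hard covering problem on planar bipartite graphs, using the non-blocking \elone-obstacle representation of Theorem~\ref{thm:planarBipartite} as the geometric backbone. Given a source instance encoded as a planar bipartite graph $G=(A\cup B, E)$, the first step is to construct the non-blocking \elone-obstacle representation of $G$ and thicken it into an orthogonal polygon $P$ with holes: each non-obstacle grid point becomes a unit square of $P$ and each obstacle becomes a unit hole, so that the positive region of $P$ consists of a vertex square for every vertex of $G$, joined by unit-wide corridors that trace the xy-monotone edge paths of the representation. The crucial geometric property, inherited from the non-blocking construction, is that a guard placed inside the vertex square of $v$ s-sees (within $P$) exactly that square, every corridor incident to $v$, and every vertex square corresponding to a neighbour of $v$ in $G$.

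Next, I would augment $P$ by attaching small orthogonal ``pocket'' gadgets. Each pocket is a narrow protrusion whose set of s-guarding points is restricted, by its shape, to a prescribed local neighbourhood---typically the interior of a single vertex square, or the union of a small number of vertex squares corresponding to one constraint of the source instance. A swap argument, which relies on the non-blocking property to preserve coverage when moving a guard onto a vertex square, would then allow me to assume that every guard of an optimal solution lies at the centre of a vertex square of $G$. This reduces the geometric problem to a purely combinatorial selection problem on $G$ whose constraints mirror the covering constraints of the source instance, by virtue of how the pockets were attached.

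The correctness proof splits in the usual way. In one direction, a solution of the source problem yields a set of vertex-square guards that, by the non-blocking property, cover every vertex square, every corridor, and every pocket. In the other direction, the pockets force every s-guard set to contain, for each constraint, at least one guard in the prescribed local neighbourhood, and the swap argument converts this into a feasible solution of the source instance of the same size. Since the construction is computable in polynomial time, the NP-hardness of s-guarding on orthogonal polygons with holes follows from the NP-hardness of the source problem.

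The hard part will be the design and analysis of the pocket gadgets. Each pocket must be small enough that its s-guardable region is exactly the intended local neighbourhood, and large enough and oriented so that it can only be covered from that neighbourhood; equally importantly, the pocket attachments must not introduce any spurious xy-monotone paths between unrelated vertex squares, as these would corrupt the adjacency structure encoded by $G$. The non-blocking property of Theorem~\ref{thm:planarBipartite} is essential throughout: it is precisely what allows a single vertex-guard to cover all of its incident corridors at once, so that the $s$-guarding cost depends on vertex-level choices (as in set cover or dominating set) rather than on independent edge-local decisions.
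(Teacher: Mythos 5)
Your overall strategy matches the paper's: reduce from dominating set on planar bipartite graphs, turn the non-blocking representation of Theorem~\ref{thm:planarBipartite} into a polygon with holes by making pixels out of non-obstacle grid points, and attach gadgets that control where guards can go. But there is a genuine gap in the coverage accounting, and it is exactly the point where the paper has to work hardest. You claim that a set of vertex-square guards corresponding to a dominating set covers ``every vertex square, every corridor, and every pocket.'' This is false for the corridors: if $(a,b)$ is an edge with neither endpoint in the dominating set, the interior of the edge-channel of $(a,b)$ is not $s$-visible from any other vertex square. A guard at a neighbour $b'$ of $a$ reaches $a$'s bar by going up the channel $(a,b')$, and monotonicity then forbids it from turning down into the channel $(a,b)$; the analogous argument blocks access from neighbours of $b$. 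So a budget of $k$ guards cannot work, and your reduction as stated has no valid ``yes $\Rightarrow$ yes'' direction.

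The paper's fix is to attach two spiral gadgets to every edge-channel, each containing a tail-pixel that can only be guarded from a short segment deep inside the spiral; this forces $2|E|$ dedicated guards whose positions, as a by-product, cover all the channels and all of the vertex bars except small ``end-pixels'' at the free end of each bar. The budget becomes $2|E|+k$, and only the end-pixels remain to be covered by the $k$ remaining guards, which is what encodes domination. This also shows why your proposed normalization step cannot be carried out: you want a swap argument moving every guard to a vertex-square centre, but any gadget strong enough to force structure on the solution (like the spiral tail-pixels) is by design \emph{not} guardable from any vertex square, so not all guards can be relocated there. The paper instead normalizes only the non-spiral guards, arguing that each guard that sees an end-pixel $\psi_v$ must lie on the bar of $v$, the bar of a neighbour of $v$, or an incident channel or spiral, and charging it to a vertex accordingly. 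Your plan needs an explicit per-edge gadget with its own guard budget, plus this weaker charging argument, to go through.
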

\begin{proof}
We reduce from minimum dominating set, i.e., the problem of finding
a set $D$ of vertices in a graph such that every vertex is either in 
$D$ or has a neighbor in $D$.  This is \textsc{NP}-hard, even
 on planar bipartite graphs \cite{CCJ90}. 
Given a planar bipartite graph $G=(A\cup B, E)$, construct the
non-blocking \elone-obstacle representation $\Gamma$ 
from Theorem~\ref{thm:planarBipartite}.
Let $P'$ consist of all
unit squares ({\em pixels}) 
around grid-points that are {\em not} in an obstacle.  
The obstacles of $\Gamma$ become holes in $P'$. 
Now for any vertex $a\in A$ 
extend the bar of $a$ slightly
rightward beyond the last edge, and for every $b\in B$ extend the bar leftward
beyond the last edge.
Finally, at every edge $e$, attach two ``spirals'' on the left and right side 
of its vertical segment; the one on
the left curls upward while the on the right curls downward.
See Fig.~\ref{fig:sguarding}.
These spirals are small enough that they fit within the
holes of $P'$, without overlapping other parts of $P'$ or each other.
We show in 
\ifarxiv
the appendix
\else
the full paper
\fi
that $G$ has a dominating set of size $k$
if and only if this polygon can be $s$-guarded with $2|E|+k$ guards.  
This proves the theorem.
\begin{figure}[h]
\hspace*{\fill}
\begin{subfigure}{0.30\linewidth}
\iffigdir
	\includegraphics[width=0.99\linewidth,page=1]{figures/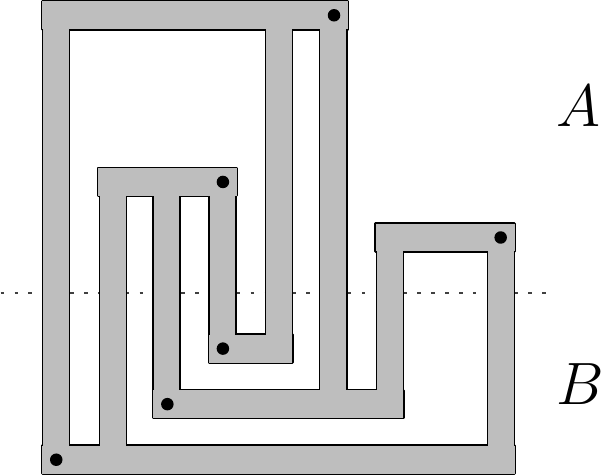}
\else
	\includegraphics[width=0.99\linewidth,page=1]{sGuards.pdf}
\fi
\caption{}
\end{subfigure}
\hspace*{\fill}
\begin{subfigure}{0.34\linewidth}
\iffigdir
	\includegraphics[width=0.99\linewidth,trim=0 3 0 12,clip,page=2]{figures/sGuards.pdf}
\else
	\includegraphics[width=0.99\linewidth,trim=0 3 0 12,clip,page=2]{sGuards.pdf}
\fi
\caption{}
\end{subfigure}
\hspace*{\fill}
\caption{The polygon for the graph in Fig.~\ref{fig:bipartite}, and gadgets that we attach.}
\label{fig:sguarding}
\label{fig:sGuarding}
\end{figure}
\end{proof}

\subsection{3D grid-obstacle representation of bipartite graphs}
In 3D, all bipartite graphs have a non-blocking \elone-obstacle representation:
Enumerate the vertices as $A=\{a_1,\dots,a_\ell\}$ and $B=\{b_1,\dots,b_k\}$.
Place a point for vertex $a_i$ at
$(0,i,0)$ and a point for vertex $b_j$ at $(j,0,1)$.  Route each
edge $(a_i,b_j)$ as the orthogonal path
$
(0,i,0)-(j,i,0)-(j,i,1)-(j,0,1),
$
and observe that two paths overlap in the $x^+$-direction at $a_i$ 
if they both begin at $a_i$, or overlap in the $y^+$-direction at $b_j$
if they both end at $b_j$, but otherwise there is no overlap.
Now obtain the \elone-obstacle representation as before by 
doubling the grid and making grid-points obstacles unless they are
used by vertices and edge-paths.
As before one argues that this is indeed a non-blocking
\elone-obstacle representation and so we have:

\begin{theorem}
Every bipartite graph has a 3D non-blocking \elone-obstacle representation.
\end{theorem}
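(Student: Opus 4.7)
The plan is to verify the two directions of the reachability equivalence. The forward direction is immediate: for each edge $(a_i,b_j)\in E$, the explicit lattice path from the grid-point $(0,2i,0)$ of $a_i$ (after doubling) to $(2j,0,2)$ of $b_j$ first increases $x$, then increases $z$, then decreases $y$, so each of $x,y,z$ is monotone along the path, and every lattice point on it is by construction not an obstacle.

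The substance is the converse: no $xyz$-monotone lattice path (meaning each of $x,y,z$ is monotone along the path) between two distinct vertex grid-points exists unless the pair is an edge of $G$. I would start by cataloguing the non-obstacle lattice points after doubling. Besides the vertex points, for each edge $(a_k,b_\ell)\in E$ exactly three ``corridors'' are open: $C_1=\{(x,2k,0):0\le x\le 2\ell\}$, $C_2=\{(2\ell,2k,z):0\le z\le 2\}$, and $C_3=\{(2\ell,y,2):0\le y\le 2k\}$; every other lattice point in the bounding box is an obstacle.

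Using this catalogue I would trace a hypothetical monotone path $\pi$ starting at $a_i=(0,2i,0)$ and prove three structural facts that together force $\pi$ to terminate at some $b_\ell$ with $(a_i,b_\ell)\in E$. First, at $z=0$ the open points lie only on rows $y=2k$ for vertex indices $k$, and the intervening rows are obstacles, so $\pi$ cannot decrease $y$ while $z=0$. Second, the first time $\pi$ raises $z$ above $0$ it must do so from a corner $(2\ell,2i,0)$ belonging to a $C_2$-corridor, which requires $(a_i,b_\ell)\in E$; monotonicity then forces $\pi$ up to $(2\ell,2i,2)$ and into a $C_3$-corridor whose only monotone continuation ends at $b_\ell$. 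Third, the $C_3$-corridors for different $\ell$ live at different $x$-values and at $z=2$ the points with $x$ odd are obstacles, so $\pi$ cannot hop between them. The analogous analysis handles paths starting at some $b_j$; and two $A$-vertices sharing the line $\{x=0,z=0\}$, or two $B$-vertices sharing $\{y=0,z=2\}$, are separated by the obstacles that doubling places between every consecutive vertex pair on that line.

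The main obstacle is the second structural fact above: ruling out ``corner-hopping'' that would let $\pi$ pick up a corridor of some edge $(a_k,b_\ell)$ with $k\ne i$. What makes this go through is the bipartite placement: $A$ lies on $\{x=0,z=0\}$ and $B$ lies on $\{y=0,z=2\}$, so every corridor of an edge not incident to $a_i$ sits on a row $y=2k\ne 2i$, and reaching such a corridor from $a_i$ would require descending in $y$ while still at $z<2$, which the first structural fact forbids. Once these three facts are checked, the construction is verified to be a non-blocking \elone-obstacle representation.
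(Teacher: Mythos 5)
Your proposal is correct and follows essentially the same route as the paper: it uses the identical placement ($a_i$ on the line $x=z=0$, $b_j$ on the line $y=0$, $z$ at the top level, edges routed as three-leg monotone staircases, then grid doubling), and verifies the representation by tracing any hypothetical monotone path through the explicit open corridors, exactly the style of argument the paper invokes ``as before'' from its Theorem~\ref{thm:3Dobstacle}. Your write-up merely makes explicit the corridor catalogue and the non-blocking check at the vertex lines, which the paper leaves implicit.
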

\begin{figure}[t]
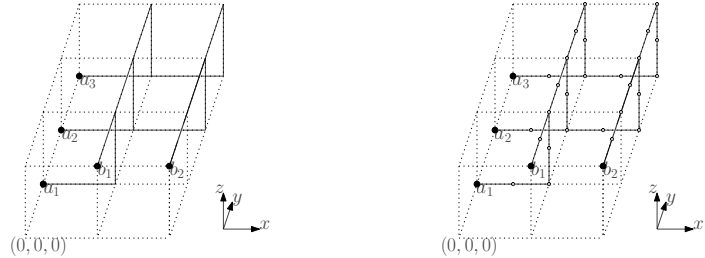

\hspace*{\fill}
\iffigdir
	\includegraphics[width=0.4\linewidth,page=3,trim=0 100 0 0,clip]{figures/all3D.pdf}
	\hspace*{\fill}
	\includegraphics[width=0.4\linewidth,page=4,trim=0 100 0 0,clip]{figures/all3D.pdf}
\else
	\includegraphics[width=0.4\linewidth,page=3,trim=0 100 0 0,clip]{all3D.pdf}
	\hspace*{\fill}
	\includegraphics[width=0.4\linewidth,page=4,trim=0 100 0 0,clip]{all3D.pdf}
\fi
\hspace*{\fill}
\caption{A 3D orthogonal representation of $K_{2,3}$, and converting it into
a \elone-obstacle representation.  Grid-points not shown are covered by obstacles.}
\label{fig:all3Dbipartite}
\end{figure}

\section{Conclusion}
\label{sec:conclusion}
In this paper, we studied \elone-obstacle representations. We gave constructions with smaller grid-size for planar graphs in 2D and all graphs in 3D.  If the graph is bipartite then we can construct representations where vertices are not considered obstacles.  We used these types of representation to prove NP-hardness of the $s$-guarding problem in 2D polygons with holes.

It remains open
whether an asymptotically smaller grid and/or fewer obstacles might be enough.
If we allow obstacles to be polygons rather than grid-points,
we use (in Theorems~\ref{thm:planar} and \ref{thm:planarBipartite})
one obstacle per face of the planar graph, or $\Theta(n)$ in total.
For \elone-obstacle representations that use straight-line segments, rather than
$xy$-monotone grid-paths, significantly fewer obstacles suffice~\cite{DM15}.
Can we create \elone-obstacle representations with $o(n)$ obstacles, at least
for some subclasses of planar graphs? Another direction for future work would be to find other classes of graphs for which we can construct non-blocking \elone-obstacle representations. Does this exist for all planar graphs in 2D?

\newpage
\bibliographystyle{splncs03}
\bibliography{ref}

\ifarxiv
\newpage
\begin{appendix}
\section{ Missing Proofs}

\paragraph{Proof of Lemma~\ref{lem:specialVR}}

Take a planar straight-line drawing of $G$ that has height $O(n)$ and where no edge is drawn 
horizontally. (For example, the drawing of de Fraysseix et al.~\cite{FPP90} is easily seen 
to achieve this if we modify the placement of the initial triangle $v_1,v_2,v_3$.) 
Direct the edges from the lower endpoint to the higher endpoint.  
Now split each vertex $v$ into two adjacent vertices $v^\ell$ and $v^r$, 
where $v^\ell$ is adjacent to all incoming edges of $v$ and $v^r$ is adjacent to all
outgoing edges.  Double the height of the drawing by inserting a new row after
each existing one.  Place $v^r$ to the right of $v$, and re-route all outgoing edges of $v$ to leave from $v^r$ instead, by adding a bend in the row above $v$. See Fig.~\ref{fig:splitDrawing}.

\begin{figure}[t]
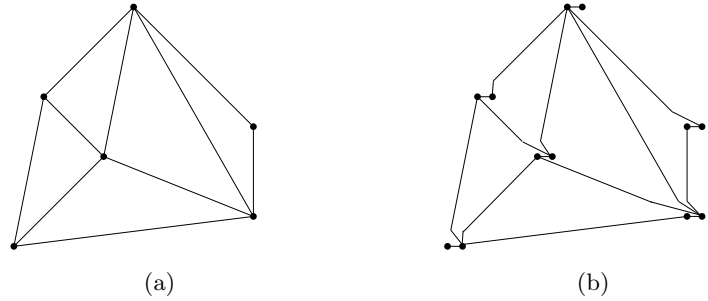

\hspace*{\fill}
\begin{subfigure}[b]{0.4\linewidth}
\iffigdir
	\includegraphics[width=0.94\linewidth,page=1]{figures/planar2D.pdf}
\else
	\includegraphics[width=0.94\linewidth,page=1]{planar2D.pdf}
\fi
	\caption{}
\end{subfigure}
\hspace*{\fill}
\begin{subfigure}[b]{0.4\linewidth}
\iffigdir
	\includegraphics[width=0.94\linewidth,page=2]{figures/planar2D.pdf}
\else
	\includegraphics[width=0.94\linewidth,page=2]{planar2D.pdf}
\fi
	\caption{}
\end{subfigure}
\hspace*{\fill}
\caption{Split the vertices of a planar drawing. Converting this into
a visibility representation gives the one shown in Fig.~\ref{fig:planar2D}.}
\label{fig:splitDrawing}
\end{figure}

Observe that all edges are drawn $y$-monotonically, and $(v^\ell,v^r)$ is drawn horizontally
for all vertices $v$. Now we convert this drawing, using the method of~\cite{Biedl14}, into a visibility 
representation such that the $y$-coordinates and the order of vertices within a row is unchanged.  
This has again height $O(n)$, and (after deleting empty columns) it also has width $O(m+n)=O(n)$~\cite{Biedl14}.  The two vertices $v^\ell$ and $v^r$ that replaced
vertex $v$ were consecutive in one row, hence in the visibility representation they are also
consecutive (and the edge between them is horizontal) and we can simply recombine them to 
obtain one horizontal segment for vertex $v$.  
This gives the required visibility representation.

\paragraph{Proof of Theorem~\ref{thm:2Dobstacle}}
We must argue that the constructed representation is a \elone-obstacle
representation. For any vertex $v$, let $v^\ell$ and $v^r$ be the left
and the right part of the bar of $v$ where downward/upward edges attach.
Consider an edge $(v_i,v_k)$ where $v_i$ has the smaller $y$-coordinate.
In the visibility representation, the corresponding
segment attaches at $v_i^r$ and $v_k^\ell$.  In the obstacle
representation, we can hence go rightward from the grid-point of $v_i$ along edge $(v_i^\ell,v_i^r)$
and the segment of $v_i^r$, 
then up along the segment of $(v_i^r,v_k^\ell)$, then rightward along $v_k^\ell$ and the
edge $(v_k^\ell,v_k^r)$ to reach $v_k$ along an $xy$-monotone grid-path.

On the other hand, assume that $v_i$ and $v_k$ have an $xy$-monotone grid-path $\pi$ between them in the
obstacle representation.  No two vertex-points have the same $y$-coordinate,  so we may assume
that the point of $v_i$ is lower.  If $\pi$ left $v_i$   on the left side, then it would have
to continue downward from there, which contradicts monotonicity.  So $\pi$ leaves $v_i$ on the
right side.  From there, it can only go upward along some edge $(v_i,v_j)$ and reach
the segment of $v_j^\ell$.  All edges attaching here go downward, which $\pi$ cannot use by
monotonicity.  So $\pi$ must continue to the grid-point of $v_j$.  Here $\pi$ is obstructed if 
$v_j\neq v_k$, so we must have $v_j=v_k$ and $(v_i,v_k)$ is an edge as desired.

\paragraph{Proof of Theorem~\ref{thm:3Dobstacle}}
We must argue that the constructed representation is a \elone-obstacle representation. Clearly, for any edge $(v_i,v_k)$ we can find an $xyz$-monotone path by walking along the route of $(v_i,v_k)$. Vice versa, if there is a monotone path $\pi$ from $v_i$ to $v_k$ with (say) $i<k$,
then it must connect $(i,i,i)$ to $(k,k,k)$ and so be going in positive direction.
Thus it must leave $v_i$ on the $x^+$-side.  From here the only option is to
continue in $z^+$-direction starting at some point $(i,j,i)$.  This necessarily
leads to $(i,j,j)$, since there are no other adjacent unobstructed grid-points.  From there
the only positive direction possible is to go to $(j,j,j)$.  But then 
$v_j$ blocks the path, so we must have $v_j=v_k$ and $(v_i,v_k)$ is an edge 
as desired.

\paragraph{Proof of Theorem~\ref{thm:sGuarding}}
We aim to show that $G$ has a dominating set of size $k$ if and only if $P$ can be guarded by $k+2|E|$ $s$-guards.  Recall that in $P$ every vertex corresponds to a bar (of the visibility representation of $G$) and every edge $e$ corresponds to a channel (along the vertical segment
that represented $e$).  Also, for each vertex $u$ we attached
an ``end-pixel'' $\psi_u$ that is beyond all attachment points of 
all edge-channels; this is on the right end of the bar if $u\in A$ and on the left end if 
$u\in B$. These are marked by black dots in Fig.~\ref{fig:sGuarding}(b)).

For any edge $(a,b)$, we also attached two spiral-gadgets to the edge-channel of $(a,b)$. 
In any such spiral $\sigma$, there are two crucial places.
One is the ``tail-pixel'' $\psi_\sigma$ at the end of the spiral
(marked by a circle in Fig.~\ref{fig:sGuarding}(b)), and the other is the 
line segment $s_\sigma$ that marks the boundary of points that can $s$-guard
$\psi_\sigma$ (marked by a cross in Fig.~\ref{fig:sGuarding}(b)).

Consider a dominating set $D$ of size $k$ in $G$ and define a set $S$ of
points in $P$ as follows.  For each $u\in D$,  add an arbitrary point
$p(u)$ of the end-pixel $\psi_u$ to $S$.  Observe that $p(u)$ guards $\psi_u$
as well as $\psi_v$ of any vertex $v$ for
which $(u,v)$ is an edge.
Secondly, for each edge-spiral $\sigma$, add the point $x_\sigma$
marked with a cross in Fig.~\ref{fig:sGuarding}.  Note that the point $x_\sigma$ of
the left spiral $\sigma$ at an edge $(a,b)$ 
can $s$-guard all of $\sigma$, the bottom half of the edge-channel for $(a,b)$
and everything of the vertex-bar of $b$ (the lower endpoint)
that is to the right of where the
edge-channel attaches.  Similarly, the point in the right spiral can see
the top half of the edge-channel and everything of the vertex-bar of $a$
to the left of where the edge-channel attaches. All the spiral-guards 
together hence cover everything except the end-pixels, but those are guarded
by the points added due to dominating set $D$.  So the chosen $k+2|E|$ points
guard everything.

Conversely, let $S$ be a set of $k+2|E|$ points in $P$ that $s$-guard all of $P$.
For each spiral-gadget $\sigma$, there must exist some guard $s\in S$ that 
guards the tail-pixel.  This guard must lie on segment $s_\sigma$ (or even
closer to the tail-pixel), and as one easily verifies, cannot see any point
in an end-pixel, or any tail-pixel of any other spiral-gadget.  Therefore,
there are $2|E|$ such points in total (call them $M$), none of which guards 
an end-pixel.

This leaves at most $k$ guards that $s$-guard all end-pixels.
Define $D$ as follows.  For any vertex $v$, if some point in $S$ lies
in the vertex-bar of $v$, add $v$ to $D$.   For any edge $e=(u,v)$,
if some point in $S$ $s$-guards some end-pixel and lies on the edge-channel 
or a spiral-gadget of $e$, then arbitrarily
add one of $u$ and $v$ to $D$.  Since none of the guards in $M$ fit
this description, we have $|D|\leq k$.  For any vertex $v$, the
end-pixel $\psi(v)$ must have been guarded by some point $g\in S$.
This implies that $p(v)$ $s$-guards $g$, which is possible only if
$g$ lies in the bar of $v$, the bar of some neighbour $u$ of $v$, or
the channel (or adjacent spirals) of some edge $(v,u)$.  Hence $g$ gives
rise to a vertex in $D$ that is either $v$ or a neighbour of $v$.
Thus $D$ is a dominating set as required.

\end{appendix}
\fi

\end{document}